\newtheorem{theorem}{Theorem}
\newtheorem{lemma}[theorem]{Lemma}
\newtheorem{definition}[theorem]{Definition}
\newcommand{\scrod}{\quad\nopagebreak}
\newenvironment{proof}
{\bigskip\noindent\textbf{Proof~}} {\marginpar{$\Box$}\bigskip}
\begin{document}

\date{}

\title{Algorithms for Testing Monomials in Multivariate Polynomials}

\author{Zhixiang Chen \hspace{5mm} Bin Fu \hspace{5mm} Yang Liu
\hspace{5mm} Robert Schweller
 \\ \\
Department of Computer Science\\
 University of Texas-Pan American\\
 Edinburg, TX 78539, USA\\
\{chen, binfu, yliu, schwellerr\}@cs.panam.edu\\\\
} \maketitle

\begin{abstract}
This paper is our second step towards developing a theory of
testing monomials in multivariate polynomials. The central
question is to ask whether a polynomial represented by an
arithmetic circuit has some types of monomials in its sum-product
expansion. The complexity aspects of this problem and its variants
have been investigated in our first paper by Chen and Fu (2010),
laying a foundation for further study. In this paper, we present
two pairs of algorithms. First, we prove that there is a
randomized $O^*(p^k)$ time algorithm for testing $p$-monomials in
an $n$-variate polynomial of degree $k$ represented by an
arithmetic circuit, while a deterministic $O^*(6.4^k + p^k)$ time
algorithm is devised when the circuit is a formula, here $p$ is a
given prime number. Second, we present a deterministic $O^*(2^k)$
time algorithm for testing multilinear monomials in
$\Pi_m\Sigma_2\Pi_t\times \Pi_k\Pi_3$ polynomials, while a
randomized $O^*(1.5^k)$ algorithm is given for these polynomials.
The first algorithm extends the recent work by Koutis (2008) and
Williams (2009) on testing multilinear monomials.  Group algebra
is exploited in the algorithm designs, in corporation with the
randomized polynomial identity testing over a finite field by
Agrawal and Biswas (2003), the deterministic noncommunicative
polynomial identity testing by Raz and Shpilka (2005) and the
perfect hashing functions by Chen {\em at el.} (2007). Finally, we
prove that testing some special types of multilinear monomial is
W[1]-hard, giving evidence that testing for specific monomials is
not fixed-parameter tractable.

\end{abstract}

\section{Introduction}
\subsection{Overview}
We begin with two examples to exhibit the motivation and necessity
of the study about the monomial testing problem for multivariate
polynomials. The first is about testing a  $k$-path in any given
undirected graph $G=(V,E)$ with $|V| = n$, and the second is about
the satisfiability problem. Throughout this paper, polynomials
refer to those with multiple variables.

For any fixed integer $c\ge 1$, for each vertex $v_i \in V$,
define a polynomial $p_{k,i}$ as follows:
\begin{eqnarray}
p_{1,i}  &=&  x_i^c, \nonumber \\
p_{k+1,i} &=&  x_i^c   \left(\sum_{(v_i,v_j)\in E} p_{k,j}\right),
\ k
>1. \nonumber
\end{eqnarray}
We define a polynomial for $G$ as
\begin{eqnarray}
p(G, k)  &=&  \sum^{n}_{i=1} p_{k,i}. \nonumber
\end{eqnarray}
Obviously, $p(G,k)$ can be represented by an arithmetic circuit.
It is easy to see that the graph $G$ has a $k$-path $v_{i_1}\cdots
v_{i_k}$ iff $p(G, k)$ has a monomial $x_{i_1}^c\cdots
x_{i_k}^c$ of degree $ck$ in its sum-product expansion. $G$ has a
Hamiltonian path iff $p(G, n)$ has the monomial $x_1^c\cdots
x_n^c$ of degree $cn$ in its sum-product expansion. One can also
see that a path with some loop can be characterized by a monomial
as well. Those observations show that testing monomials in
polynomials is closely related to solving $k$-path, Hamiltonian
path and other problems about graphs. When $c=1$, $x_{i_1}\cdots
x_{i_k}$ is multilinear. The problem of testing multilinear
monomials has recently been exploited by Koutis \cite{koutis08}
and Williams \cite{williams09} to design innovative randomized
parameterized algorithms for the $k$-path problem.

Now, consider any CNF formula $f= f_1 \wedge \cdots \wedge f_m$, a
conjunction of $m$ clauses with each clause $f_i$ being a
disjunction of some variables or negated ones. We may view
conjunction as multiplication and disjunction as addition, so $f$
looks like a {\em "polynomial"}, denoted by $p(f)$. $p(f)$ has a
much simpler $\Pi\Sigma$ representation, as will be defined in the
next section, than general arithmetic circuits. Each {\em
"monomial"} $\pi = \pi_1 \ldots \pi_m$ in the sum-product
expansion of $p(f)$ has a literal $\pi_i$ from the clause $f_i$.
Notice that  a boolean variable $x \in Z_2$ has two properties of
$x^2 = x$ and $x \bar{x} = 0$. If we could realize these
properties for $p(f)$ without unfolding it into its sum-product,
then $p(f)$ would be a {\em "real polynomial"} with two
characteristics: (1) If $f$ is satisfiable then $p(f)$ has a
multilinear monomial, and (2) if $f$ is not satisfiable then
$p(f)$ is identical to zero. These would give us two approaches
towards testing the satisfiability of $f$. The first is to test
multilinear monomials in $p(f)$, while the second is to test the
zero identity of $p(f)$. However, the task of realizing these two
properties with some algebra to help transform $f$ into a needed
polynomial $p(f)$ seems, if not impossible, not easy. Techniques
like arithmetization in Shamir \cite{shamir92} may not be suitable
in this situation. In many cases, we would like to move from $Z_2$
to some larger algebra so that we can enjoy more freedom to use
techniques that may not be available when the domain is too
constrained. The algebraic approach within $Z_2[Z^k_2]$ in Koutis
\cite{koutis08} and Williams \cite{williams09} is one example
along the above line. It was proved in Bshouty {\em et al.}
\cite{bshouty95}  that extensions of  DNF formulas over $Z^n_2$ to
$Z_N$-DNF formulas over the ring $Z^n_N$ are learnable by a
randomized algorithm with equivalence queries, when $N$ is large
enough. This is possible because a larger domain may allow more
room to utilize randomization.

There has been a long history in theoretical computer science with
heavy involvement of studies and applications of polynomials. Most
notably, low degree polynomial testing/representing and polynomial
identity testing have played invaluable roles in many major
breakthroughs in complexity theory. For example, low degree
polynomial testing is involved in the proof of the PCP Theorem,
the cornerstone of the theory of computational hardness of
approximation and the culmination of a long line of research on IP
and PCP (see, Arora {\em at el.} \cite{arora98} and Feige {\em et
al.} \cite{feige96}). Polynomial identity testing has been
extensively studied due to its role in various aspects of
theoretical computer science (see, for examples, Chen and Kao
\cite{chen00}, Kabanets and Impagliazzo \cite{kabanets03}) and its
applications in various fundamental results such as Shamir's
IP=PSPACE \cite{shamir92} and the AKS Primality Testing
\cite{aks04}. Low degree polynomial representing
\cite{minsky-papert68} has been sought for so as to prove
important results in circuit complexity, complexity class
separation and subexponential time learning of boolean functions
(see, for examples, Beigel \cite{beigel93}, Fu\cite{fu92}, and
Klivans and Servedio \cite{klivans01}). These are just a few
examples. A survey of the related literature is certainly beyond
the scope of this paper.

The above two examples of the $k$-path testing and satisfiability
problems, the rich literature about polynomial testing and many
other observations have motivated us to develop a new theory of
testing monomials in polynomials represented by arithmetic
circuits or even simpler structures. The monomial testing problem
is related to, and somehow complements with, the low degree
testing and the identity testing of polynomials. We want to
investigate various complexity aspects of the monomial testing
problem and its variants with two folds of objectives. One is to
understand how this problem relates to critical problems in
complexity, and if so to what extent. The other is to exploit
possibilities of applying algebraic properties of polynomials to
the study of those critical problems. As a first step, Chen and Fu
\cite{chen-fu10} have proved a series of results: The multilinear
monomial testing problem for $\Pi\Sigma\Pi$ polynomials is
NP-hard, even when each clause has at most three terms. The
testing problem for $\Pi\Sigma$ polynomials is in P, and so is the
testing for two-term $\Pi\Sigma\Pi$ polynomials. However, the
testing for a product of one two-term $\Pi\Sigma\Pi$ polynomial
and another $\Pi\Sigma$ polynomial is NP-hard. This type of
polynomial products is, more or less, related to the polynomial
factorization problem. We have also proved that testing
$c$-monomials for two-term $\Pi\Sigma\Pi$ polynomials is NP-hard
for any $c > 2$, but the same testing is in P for $\Pi\Sigma$
polynomials. Finally, two parameterized algorithms have been devised
for three-term $\Pi\Sigma\Pi$ polynomials and products of two-term
$\Pi\Sigma\Pi$ and $\Pi\Sigma$ polynomials. These results have
laid a basis for further study about testing monomials.

\subsection{Contributions and Methods}

The major contributions of this paper are two pairs of algorithms.
For the first pair, we prove that there is a randomized $O^*(p^k)$
time algorithm for testing $p$-monomials in an $n$-variate
polynomial of degree $k$ represented by an arithmetic circuit,
while a deterministic $O^*(6.4^k + p^k)$ time algorithm is devised
when the circuit is a formula, here $p$ is a given prime number.
The first algorithm extends two recent algorithms for testing
multilinear monomials, the $O^*(2^{3k/2})$ algorithm by Koutis
\cite{koutis08} and the $O(2^k)$ algorithm by Williams
\cite{williams09}. Koutis \cite{koutis08} initiated the
application of group algebra $Z_2[Z^k_2]$ to randomized testing of
multilinear monomials in a polynomial. Williams \cite{williams09}
incorporated the randomized Schwartz-Zippel polynomial identity
testing with the group algebra $\mbox{GF}(2^{\ell})[Z^k_p]$ for
some relatively small ${\ell}$ in comparison with $k$ to achieve
the design of his algorithm. The success of applying group algebra
to designing multilinear monomial testing algorithms is based on
two simple but elegant properties found by Koutis, by which
annihilating non-multilinear monomials is possible via
replacements of variables by vectors in $Z^k_2$. When extending
the group algebra from $Z_2[Z^k_p]$ to $Z_p[Z^k_p]$ for a given
prime $p$ these two properties, as addressed in Section
\ref{sec-2}, are fortunately no longer valid. To make the matter
worse, the Schwartz-Zippel algorithm is not applicable to the
larger algebra due to the lack of these two properties. Nevertheless,
we find new characteristics about $Z_p[Z^k_p]$ and integrate these
with a more powerful randomized polynomial identity testing
algorithm by Agrawal and Biswas \cite{agrawal-biswas03} to
accomplish the design of our algorithm. Our deterministic
algorithm is obtained via derandomizing the two random processes involved
in the first algorithm:
deterministic selection of a set of linearly independent vectors for an unknown
monomial to guarantee its survivability from vector replacements;
and deterministic polynomial identity testing. The first part is
realized with the perfect hashing functions by Chen {\em at el.}
\cite{jianer-chen07}, while the second is carried out by the Raz
and Shpilka \cite{raz05} algorithm for noncommunicative
polynomials.

For the second pair of our algorithms, we present a deterministic
$O^*(2^k)$ time algorithm for testing multilinear monomials in
$\Pi_m\Sigma_2\Pi_t\times \Pi_k\Pi_3$ polynomials, while a
randomized $O^*(1.5^k)$ algorithm is given for these polynomials.
It has been proved in Chen and Fu \cite{chen-fu10} that testing
multilinear monomials in $\Pi_m\Sigma_2\Pi_t$ or $\Pi_k\Pi_3$
polynomials is solvable in polynomial time. However, the problem
becomes NP-hard for $\Pi_m\Sigma_2\Pi_t\times \Pi_k\Pi_3$
polynomials. Our two algorithms use the quadratic algorithm by
Chen and Fu \cite{chen-fu10} for testing multilinear monomials in
$\Pi_m\Sigma_2\Pi_t$ polynomials as the base case algorithm. Both
new algorithms improve the $O^*(3^k)$ algorithm in \cite{chen-fu10}.

Finally, we prove that testing some special types of multilinear
monomials is W[1]-hard, giving evidence that testing for specific monomials  is
not fixed-parameter tractable. One shall notice that difference between the general
monomial testing and the specific monomial testing. The former asks for the existence of
{\em "any one''} from a set of possibly  many monomials that are needed. The latter asks for
{\em "a specific one''} from the set.

\subsection{Organization}
The rest of the paper is organized as follows. In Section 2, we
introduce the necessary notations and definitions. In Section 3,
we prove new properties about the group algebra $Z_p[Z^k_p]$ to help
annihilate any monomials that are not $p$-monomials. These properties are then
integrated with the randomized polynomial identity testing over a
finite field to help design the randomize $p$-monomial testing
algorithm. In Section 4, the two randomized processes involved in
the randomized algorithm obtained in the previous section will be
derandomized for polynomials represented by formulas. The success
is based on combining deterministic construction of perfect
hashing functions with deterministic noncommunicative polynomial
identity testing. Section 5 first presents a deterministic
parameterized algorithm for testing multilinear monomials in
$\Pi_m\Sigma_2\Pi_t \times \Pi_k\Sigma_3$ polynomials, and then
gives a more efficient randomized parameterized algorithm for the
these polynomials. Finally, we show in Section 5 that testing some
special type of multilinear monomials, called $k$-clique
monomials, is W[1]-hard.

\section{Preliminaries}
\subsection{Notations and Definitions}

For variables $x_1, \dots, x_n$, let ${\cal P} [x_1,\cdots,x_n]$
denote the communicative ring of all the $n$-variate polynomials
with coefficients from a finite field ${\cal P}$. For $1\le i_1 <
\cdots <i_k \le n$, $\pi =x_{i_1}^{j_1}\cdots x_{i_k}^{j_k}$ is
called a monomial. The degree of $\pi$, denoted by
$\mbox{deg}(\pi)$, is $\sum^k_{s=1}j_s$. $\pi$ is multilinear, if
$j_1 = \cdots = j_k = 1$, i.e., $\pi$ is linear in all its
variables $x_{i_1}, \dots, x_{j_k}$. For any given integer $c\ge
1$, $\pi$ is called a $c$-monomial, if $1\le j_1, \dots, j_k < c$.

An arithmetic circuit, or circuit for short, is a direct acyclic
graph with $+$ gates of unbounded fan-in, $\times$ gates of fan-in
two, and all terminals corresponding to variables. The size,
denoted by $s(n)$, of a circuit with $n$ variables is the number
of gates in it. A circuit is called a {\rm formula}, if the
fan-out of every gate is at most one, i.e., its underlying direct
acyclic graph is a tree.

By definition, any polynomial $F(x_1,\dots,x_n)$ can be expressed
as a sum of a list of monomials, called the sum-product expansion.
The degree of the polynomial is the largest degree of its
monomials in the expansion. With this expression, it is trivial to
see whether $F(x_1,\dots,x_n)$ has a multilinear monomial, or a
monomial with any given pattern. Unfortunately, this expression is
essentially problematic and infeasible to realize, because a
polynomial may often have exponentially many monomials in its
expansion.

In general, a polynomial $F(x_1,\dots,x_n)$ can be represented by
a circuit or some even simpler structure as defined in the
following. This type of representation is simple and compact and
may have a substantially smaller size, say, polynomially in $n$,
in comparison with the number of all monomials in the sum-product
expansion. The challenge is how to test whether $F(x_1,\dots,x_n)$
has a multilinear monomial, or some other needed monomial, efficiently
without unfolding it into its sum-product expansion?

Throughout this paper, the $O^*(\cdot)$ notation is used to
suppress $\mbox{poly}(n,k)$ factors in time complexity bounds.

\begin{definition}\scrod
Let $F(x_1,\dots,x_n)\in {\cal P}[x_1,\dots,x_n]$ be any given
polynomial. Let $m, s, t\ge 1$ be integers.
\begin{itemize}
\item $F(x_1,\ldots, x_n)$ is said to be a $\Pi_m\Sigma_s\Pi_t$
polynomial, if $F(x_1,\dots,x_n)=\prod_{i=1}^t F_i$, $F_i =
\sum_{j=1}^{r_i} X_{ij}$ and $1\le r_i \le s$, and $X_{ij}$ is a product of variables with
$\mbox{deg}(X_{ij})\le t$. We call each $F_i$ a clause. Note that
$X_{ij}$ is not a monomial in the sum-product expansion of
$p(x_1,\dots,x_n)$ unless $m=1$. To differentiate this subtlety,
we call $X_{ij}$ a term.

\item In particular, we say $F(x_1,\dots,x_n)$ is a
$\Pi_{m}\Sigma_s$ polynomial, if it is a
 $\Pi_m\Sigma_s\Pi_1$ polynomial. Here, each clause in $f$ is a linear addition
 of single variables. In other word, each term has degree $1$.


\item $F(x_1,\dots,x_n)$ is called a $\Pi_m\Sigma_s\Pi_t \times
\Pi_k\Sigma_{\ell}$ polynomial, if $F(x_1,\dots,x_n) = f_1 \cdot f_2$
such that $f_1$ is a $\Pi_m\Sigma_s\Pi_t$ polynomial and $f_2$ is
a $\Pi_k\Sigma_{\ell}$ polynomial.

\end{itemize}
\end{definition}

When no confusion arises from the context, we use $\Pi\Sigma\Pi$
and $\Pi\Sigma$ to stand for $\Pi_m\Sigma_s\Pi_t$ and
$\Pi_m\Sigma_s$, respectively.




\subsection{The Group Algebra $F[Z_p^k$]}
For any prime $p$ and integer $k \ge 2$, we consider the group
$Z^k_p$ with the multiplication $\cdot$ defined as follows. For
$k$-dimensional column vectors $\vec{x}, \vec{y} \in Z^k_p$ with
$\vec{x} = (x_1, \ldots, x_k)^T$ and $\vec{y} = (y_1, \ldots,
y_k)^T$,
\begin{eqnarray}
\vec{x} \cdot \vec{y} &=& (x_1+y_1 \pmod{p}, \ldots, x_k+y_k
\pmod{p}).
\end{eqnarray}
$\vec{\bf 0}=(0, \ldots, 0)^T$ is the zero element in the group.
For any field $F$, the group algebra $F[Z^k_p]$ is defined as
follows. Every element $u \in F[Z^k_p]$ is a linear addition of the form
\begin{eqnarray}\label{exp-2}
 u &=& \sum_{\vec{x}\in Z^k_p, a_{\vec{x}}\in F} a_{\vec{x}} \vec{x}.
\end{eqnarray}
For any element
\begin{eqnarray}
v &=& \sum_{\vec{x}\in Z^k_p, b_{\vec{x}}\in F} b_{\vec{x}}
\vec{x}, \nonumber
\end{eqnarray}
We define
\begin{eqnarray}
 u + v  &=& \sum_{a_{\vec{x}, b_{\vec{x}}}\in F,\  \vec{x}\in Z^k_p}  (a_{\vec{x}}+b_{\vec{x}}\pmod{p})
 \vec{x}, \  \mbox{and} \\
u \cdot v &=& \sum_{a_{\vec{x}}, b_{\vec{y}}\in F, \mbox{ and }
\vec{x}, \vec{y}\in Z^k_p}  (a_{\vec{x}} b_{\vec{y}}\pmod{p})
(\vec{x}\cdot \vec{y}).
\end{eqnarray}
For any scalar $w \in F$,
\begin{eqnarray}
 w u &=& a \left(\sum_{\vec{x}\in Z^k_p, \ a_{\vec{x}}\in F} a_{\vec{x}} \vec{x}\right)
 = \sum_{\vec{x}\in Z^k_p,\  a_{\vec{x}}\in F} (w a_{\vec{x}} \pmod{p})\vec{x}.
\end{eqnarray}
The zero element in $F[Z^k_p]$ is the one  as represented in expression
(\ref{exp-2}) with zero coefficients in $F$:
\begin{eqnarray}
{\bf  0} &=& \sum_{\vec{x}\in Z^k_p} 0 \vec{x} = 0\vec{\bf 0}.
\end{eqnarray}
The identity element in $F[Z^k_p]$ is
\begin{eqnarray}
 {\bf 1} &=& 1 \vec{\bf 0} = \vec{\bf 0}.
\end{eqnarray}

For any vector $\vec{v} =(v_1, \ldots, v_k)^T \in Z_p^k$, for
$i\ge 0$, let
\begin{eqnarray}
(\vec{v})^i &=& (i v_1 \pmod{p}, \ldots, i v_k\pmod{p})^T.
\nonumber
\end{eqnarray}
 In particular, we have
 \begin{eqnarray}
 & & (\vec{v})^0 = (\vec{v})^p =
\vec{\bf 0}. \nonumber
\end{eqnarray}

When it is clear from the context, we will simply use $x y $ and
$x+y$ to stand for  $x y(\bmod{p})$ and $x+y \pmod{p}$,
respectively.

\section{Randomized Testing of $p$-Monomials}\label{sec-2}

Group algebra $Z_2[Z^k_2]$ was first used by Koutis
\cite{koutis08} and later by Williams \cite{williams09} to devise
a randomized $O^*(2^k)$ time  algorithm to test multilinear
monomials in $n$-variate polynomials represented by arithmetic
circuits. We shall extend $Z_2[Z^k_2]$ to
$Z_p[Z^d_p]$ to test $p$-monomials for some $d>k$.  Two key
properties in $Z_2[Z^k_2]$, as first found by Koutis
\cite{koutis08}, that are crucial to multilinear monomial testing
are unfortunately no longer valid in $Z_p[Z^d_p]$. Instead, we
establish new properties in Lemmas \ref{lem3} and \ref{lem4}.
Also, the Schwartz-Zippel algorithm \cite{motwani95} for
randomized polynomial identity testing adopted by Williams
\cite{williams09} is not applicable to our case. Instead, we have
to use a more advanced randomized polynomial identity testing
algorithm, the Agrawal and Biswas algorithm
\cite{agrawal-biswas03}.

Let $p$ be a prime number. Following conventional notations in
linear algebra, for any vectors $\vec{v}_1, \ldots, \vec{v}_t \in
Z^k_p$ with $k\ge 1$ and $t\ge 1$, let $\mbox{span}(\vec{v}_1,
\ldots, \vec{v}_t)$ be the linear space spanned by these vectors.
That is,
\begin{eqnarray}
\mbox{span}(\vec{v}_1, \ldots, \vec{v}_t) &=&
\{a_1\vec{v}_1+\cdots+a_t\vec{v}_t | a_1, \ldots, a_t \in Z_p\}.
\nonumber
\end{eqnarray}

We first give two simple properties about $\pmod{p}$ operation.

\begin{lemma}\label{lem1}
For any $x, y \in Z_p$, we have $(x + y)^p = x^p + y^p \pmod{p}$.
\end{lemma}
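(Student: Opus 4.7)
The plan is to use the classical binomial-theorem argument (the ``Freshman's Dream''). Working in the integers, expand
\begin{eqnarray}
(x+y)^p &=& \sum_{k=0}^{p} \binom{p}{k} x^{k} y^{p-k}, \nonumber
\end{eqnarray}
and then reduce modulo $p$. The only terms that survive are those where the binomial coefficient is not divisible by $p$, namely $k=0$ and $k=p$, which contribute $y^p$ and $x^p$ respectively.

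The one nontrivial step is therefore to verify that $\binom{p}{k} \equiv 0 \pmod{p}$ for every $1 \le k \le p-1$. I would argue as follows: write $\binom{p}{k} = \frac{p!}{k!\,(p-k)!}$, so that $k!\,(p-k)!\,\binom{p}{k} = p!$. The right-hand side is divisible by $p$, while the factors $k!$ and $(p-k)!$ consist entirely of positive integers strictly less than $p$. Since $p$ is prime, it cannot divide any of those factors, so by primality of $p$ it must divide $\binom{p}{k}$.

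Combining these two observations yields $(x+y)^p \equiv x^p + y^p \pmod{p}$ for all integer representatives of $x, y \in Z_p$, which is precisely the claim. I do not anticipate any real obstacle; the entire content is the divisibility of the interior binomial coefficients by $p$, and the prime hypothesis is used exactly once in that step.
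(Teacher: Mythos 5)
Your proof is correct and follows essentially the same route as the paper's: expand $(x+y)^p$ by the binomial theorem and observe that $\binom{p}{k}\equiv 0\pmod p$ for $1\le k\le p-1$. The only difference is that you spell out the divisibility argument (via $k!\,(p-k)!\,\binom{p}{k}=p!$ and primality of $p$) which the paper simply asserts.
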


\begin{proof} $(x+y)^p = \sum^p_{i=0} (^p_i) x^{p-i} y^i
     =  x^p + y^p + \sum^{p-1}_{i=1} (^p_i)x^{p-i}y^{i}. $
Since $p$ is prime, $(^p_i)$ has a factor $p$, implying $(^p_i) =
0 \pmod{p}$, $1\le i\le p-1$. Hence, $(x+y)^p = x^p + y^p
\pmod{p}$.
\end{proof}

\begin{lemma}\label{lem2}
For any $x, y \in Z_p$, we have $((p-1)x + y)^p = (p-1)x^p + y^p
\pmod{p}$.
\end{lemma}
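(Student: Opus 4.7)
The plan is to derive Lemma \ref{lem2} directly from Lemma \ref{lem1} together with Fermat's little theorem, which handles the $(p-1)^p$ factor cleanly. First I would apply Lemma \ref{lem1} to the pair $(p-1)x$ and $y$ (viewed as elements of $Z_p$) to get
\[
((p-1)x + y)^p \;=\; ((p-1)x)^p + y^p \;=\; (p-1)^p\, x^p + y^p \pmod{p}.
\]
So the entire statement reduces to showing $(p-1)^p \equiv p-1 \pmod{p}$.

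Next I would dispatch $(p-1)^p \equiv p-1 \pmod{p}$ by Fermat's little theorem: for any $a \in Z_p$, $a^p \equiv a \pmod{p}$ (the case $a=0$ is trivial; for $a \neq 0$ it follows from $a^{p-1}\equiv 1 \pmod p$). Applying this with $a = p-1$ gives the required identity. As a sanity check for small cases, note that when $p$ is odd one can also observe $(p-1)^p \equiv (-1)^p = -1 \equiv p-1 \pmod{p}$, and when $p=2$ the identity degenerates to $1^2 = 1$, so the result holds uniformly.

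Combining these two steps yields $((p-1)x + y)^p \equiv (p-1)x^p + y^p \pmod p$, which is exactly the claim. There is really no obstacle here; the only thing worth noting is that Lemma \ref{lem1} is stated for two summands, so if one wanted an alternative, more self-contained route, one could instead expand $((p-1)x+y)^p$ directly via the binomial theorem and observe that every cross term $\binom{p}{i}((p-1)x)^{p-i}y^i$ with $1 \le i \le p-1$ vanishes $\pmod p$, then handle $(p-1)^p \pmod p$ as above. The Lemma \ref{lem1} route is shorter and I would adopt it.
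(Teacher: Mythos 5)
Your proof is correct and follows exactly the same route as the paper's: apply Lemma \ref{lem1} to the summands $(p-1)x$ and $y$, then reduce $(p-1)^p$ to $p-1$ via Fermat's little theorem. The additional sanity checks and the alternative binomial-expansion route you mention are fine but not needed.
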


\begin{proof}
By Lemma \ref{lem1}, $((p-1)x + y)^p \equiv (p-1)^px^p + y^p
\pmod{p}$. By Fermat's Little Theorem, $(p-1)^p = (p-1) \pmod{p}$.
Thus, $((p-1)x + y)^p = (p-1)x^p + y^p \pmod{p}$.
\end{proof}

The first crucial, though simple, property observed by Koustis
\cite{koutis08} about testing multilinear monomials is that
replacing any variable $x$ by $(\vec{v}+\vec{\bf 0})$ will
annihilate $x^t$ for any $t\ge 2$, where $\vec{v}\in Z_2^k$ and
$\vec{v}_0$ is the zero vector. This property is not valid in
$Z_p[Z^d_p]$. However, we shall prove the following lemma that
helps annihilate any monomials that are not $p$-monomials.

\begin{lemma}\label{lem3}
Let $\vec{v}_0 \in Z_p^d$  be the zero vector and $\vec{v}_i \in
Z_p^d$ be any vector. Then, we have
\begin{eqnarray}\label{exp-lem3}
((p-1)\vec{v}_i + \vec{v}_0)^p &=& {\bf 0},
\end{eqnarray}
i.e., the zero element in $Z_p[Z_p^d]$.
\end{lemma}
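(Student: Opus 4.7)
The plan is to mimic, inside the group algebra $Z_p[Z_p^d]$, the scalar argument already used in Lemmas \ref{lem1} and \ref{lem2}. Two facts make this possible: (i) the group $Z_p^d$ is abelian, so $Z_p[Z_p^d]$ is a commutative ring and the binomial theorem applies to $((p-1)\vec v_i + \vec v_0)^p$; and (ii) under the notation $(\vec v)^i$ defined just above the lemma, the group-algebra $p$-th power of any group element $\vec v \in Z_p^d$ is $p\vec v \pmod p = \vec{\bf 0}$, i.e.\ the identity $\bf 1$ of $Z_p[Z_p^d]$.

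First I would expand
\begin{eqnarray*}
((p-1)\vec v_i + \vec v_0)^p &=& \sum_{j=0}^{p} \binom{p}{j}\,((p-1)\vec v_i)^{\,j}\,(\vec v_0)^{\,p-j}
\end{eqnarray*}
and observe that for $1\le j\le p-1$ the coefficient $\binom{p}{j}$ is divisible by $p$ and so vanishes in $Z_p$. This is the same cancellation that powered the proof of Lemma \ref{lem1}. Only the extreme terms $((p-1)\vec v_i)^p$ and $(\vec v_0)^p$ survive.

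Next I would evaluate each survivor. Since $\vec v_0$ is the zero vector of $Z_p^d$, it is the identity $\bf 1$ of the group algebra, hence $(\vec v_0)^p = \bf 1 = \vec v_0$. For the other term, pulling the scalar out and then applying Fermat's Little Theorem (exactly as in Lemma \ref{lem2}) gives $((p-1)\vec v_i)^p = (p-1)^p (\vec v_i)^p = (p-1)(\vec v_i)^p$, and by observation (ii) above, $(\vec v_i)^p = \vec{\bf 0} = \bf 1$. Thus
\begin{eqnarray*}
((p-1)\vec v_i + \vec v_0)^p &=& (p-1)\,{\bf 1} + {\bf 1} \;=\; p\,{\bf 1} \;=\; {\bf 0},
\end{eqnarray*}
which is the claimed identity.

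There is no real obstacle in this proof; the only care needed is to keep the two meanings of exponentiation distinct — scalar powers of coefficients in $Z_p$ (handled by Fermat) versus repeated group multiplication of vectors in $Z_p^d$ (handled by the definition of $(\vec v)^i$). Once these are kept straight, the commutativity of $Z_p[Z_p^d]$ reduces the statement to the scalar case already proved.
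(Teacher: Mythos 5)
Your proof is correct and follows essentially the same route as the paper: the paper simply invokes Lemma \ref{lem2} to split the $p$-th power and then uses $(\vec{v}_i)^p = \vec{\bf 0}$ to conclude $p\vec{v}_0 = {\bf 0}$. You merely inline the binomial-coefficient and Fermat arguments of Lemmas \ref{lem1}--\ref{lem2} directly in the group algebra, which is if anything slightly more careful, since it makes explicit why those scalar lemmas transfer to the commutative ring $Z_p[Z_p^d]$.
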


\begin{proof}
By Lemma \ref{lem2}, we have $((p-1)\vec{v}_i + \vec{v}_0)^p =
(p-1)(\vec{v}_i)^p + (\vec{v}_0)^p \pmod{p}
  =  (p-1)\vec{v}_0 + \vec{v}_0 = p\vec{v}_0 \pmod{p} = {\bf 0}.$
\end{proof}

The second crucial property found by Koutis \cite{koutis08} has
two parts: (a) Replacing variables $x_{i_j}$ in a multilinear
monomial $x_{i_1}\cdots x_{i_k}$ with $(\vec{v}_{i_j}+\vec{v}_0)$
will annihilate the monomial, if  the vectors $\vec{v}_{i_j}$ are
linearly dependent in $Z_2^k$. (b) If these vectors are linearly
independent, then the sum-product expansion of the monomial after
the replacements will yield a sum of all $2^k$ vectors in $Z^k_2$.
However, neither (a) nor (b) is in general true in $Z_p[Z^k_p]$.
Fortunately, we have the following lemma, though not as {\em
"structurally"} perfect as (b).

\begin{lemma}\label{lem4}
Let $x_1^{m_1}\cdots x_{t}^{m_t}$ be any given $p$-monomial of degree
$k$.  If vectors $\vec{v}_1, \ldots, \vec{v}_t \in Z_p^d$ are
linearly independent, then there are nonzero coefficients $c_i \in
Z_p$ and distinct vector $\vec{u}_j \in Z_p^d$ such that
\begin{eqnarray}\label{exp-lem4}
((p-1)\vec{v}_1 + \vec{v}_0)^{m_1} \cdots ((p-1)\vec{v}_t +
\vec{v}_0)^{m_t} &=& c_1\vec{\bf 0} + \left(
\sum^{(m_1+1)(m_2+1)\cdots (m_t+1)}_{i=2}\right) c_i \vec{u}_i,
\end{eqnarray}
where $c_1 = 1$.
\end{lemma}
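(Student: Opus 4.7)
The plan is to expand the product on the left-hand side of (\ref{exp-lem4}) using the binomial theorem inside the group algebra $Z_p[Z_p^d]$, identify the resulting terms, and then use the linear independence of $\vec{v}_1,\ldots,\vec{v}_t$ together with $m_i<p$ to show both (i) that the resulting group elements are pairwise distinct and (ii) that their coefficients are nonzero in $Z_p$.

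First, since $\vec{v}_0 = \vec{\bf 0}$ is the identity of the group, the group algebra $Z_p[Z_p^d]$ is commutative and $\vec{v}_0^j = \vec{v}_0$ for all $j\ge 0$, so the ordinary binomial theorem applies:
\begin{eqnarray}
((p-1)\vec{v}_i + \vec{v}_0)^{m_i} &=& \sum_{j_i=0}^{m_i}\binom{m_i}{j_i}(p-1)^{j_i}(\vec{v}_i)^{j_i}, \nonumber
\end{eqnarray}
where $(\vec{v}_i)^{j_i}$ denotes, as defined in the preliminaries, the vector $j_i\vec{v}_i\pmod p$. Taking the product over $i=1,\ldots,t$ and distributing yields
\begin{eqnarray}
\prod_{i=1}^{t}((p-1)\vec{v}_i+\vec{v}_0)^{m_i} &=& \sum_{0\le j_i\le m_i}\Bigl(\prod_{i=1}^{t}\binom{m_i}{j_i}(p-1)^{j_i}\Bigr)\bigl(j_1\vec{v}_1+\cdots+j_t\vec{v}_t\bigr), \nonumber
\end{eqnarray}
a sum of $(m_1+1)\cdots(m_t+1)$ terms.

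Next, I would verify distinctness of the group elements. Suppose $j_1\vec{v}_1+\cdots+j_t\vec{v}_t = j'_1\vec{v}_1+\cdots+j'_t\vec{v}_t$ in $Z_p^d$. Then $\sum_i(j_i-j'_i)\vec{v}_i = \vec{\bf 0}\pmod p$, and linear independence of the $\vec{v}_i$ forces $j_i\equiv j'_i\pmod p$. Since $x_1^{m_1}\cdots x_t^{m_t}$ is a $p$-monomial we have $m_i<p$, hence $0\le j_i,j'_i<p$, so $j_i=j'_i$. Thus the $(m_1+1)\cdots(m_t+1)$ vectors are pairwise distinct, and in particular we may set $\vec{u}_1=\vec{\bf 0}$ (coming from $j_1=\cdots=j_t=0$) and let $\vec{u}_2,\ldots$ enumerate the others.

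Finally, I would check that every coefficient is nonzero in $Z_p$. Because $0\le j_i\le m_i<p$, each binomial coefficient $\binom{m_i}{j_i}$ is a ratio of factorials involving only integers strictly less than $p$, hence is coprime to $p$ and nonzero modulo $p$. Also $(p-1)^{j_i}\not\equiv 0\pmod p$. Therefore the product of coefficients $c = \prod_i\binom{m_i}{j_i}(p-1)^{j_i}$ is nonzero in $Z_p$, and the coefficient attached to $\vec{\bf 0}$ (the case $j_1=\cdots=j_t=0$) equals $\prod_i\binom{m_i}{0}(p-1)^0 = 1$, as required. There is no real obstacle here — both of Koutis' properties fail for $Z_p[Z_p^d]$ in their original form, but the combinatorial miracle is the mild one that $m_i<p$ simultaneously kills any modular collision among group elements and any modular vanishing of binomial coefficients.
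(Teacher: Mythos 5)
Your proof is correct and follows essentially the same route as the paper's: binomial expansion in $Z_p[Z_p^d]$, identification of $(\vec{v}_i)^{j_i}$ with $j_i\vec{v}_i$, distinctness of the exponent vectors from linear independence plus $m_i<p$, and nonvanishing of $\binom{m_i}{j_i}(p-1)^{j_i}$ modulo $p$. Your write-up is in fact a bit more explicit than the paper's on why the collision $j_i\equiv j'_i\pmod p$ forces $j_i=j'_i$, but there is no substantive difference in approach.
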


\begin{proof}
\begin{eqnarray}
& & ((p-1)\vec{v}_1 + \vec{v}_0)^{m_1} \cdots ((p-1)\vec{v}_t +
\vec{v}_0)^{m_t} \nonumber \\
& & = \left(\sum^{m_1}_{i_1=0}(^{m_1}_{i_1})(p-1)^{i_1}
(\vec{v}_1)^{i_1}\right)
\left(\sum^{m_2}_{i_2=0}(^{m_2}_{i_2})(p-1)^{i_2}
(\vec{v}_2)^{i_2}\right) \cdots
\left(\sum^{m_t}_{i_t=0}(^{m_t}_{i_t})(p-1)^{i_t}
(\vec{v}_t)^{i_t}\right)  \nonumber \\
& & =  \sum^{m_1}_{i_1=0} \sum^{m_2}_{i_2=0} \cdots
\sum^{m_t}_{i_t=0} (^{m_1}_{i_1})(^{m_2}_{i_2}) \cdots
(^{m_t}_{i_t}) (p-1)^{i_1+t_2\cdots+i_t} (\vec{v}_1)^{i_1}
(\vec{v}_2)^{i_2} \cdots (\vec{v}_t)^{i_t}
\end{eqnarray}
As noted in the previous section, in the vector space $Z_p^d$, we have
\begin{eqnarray}\label{exp-a}
(\vec{v}_1)^{i_1} (\vec{v}_2)^{i_2} \cdots (\vec{v}_t)^{i_t} &=&
i_1 \vec{v}_i + i_2 \vec{v}_2 + \cdots + t_t \vec{v}_t.
\end{eqnarray}
Since $\vec{v}_1, \vec{v}_2, \ldots, \vec{v}_t$ are linearly
independent, by expression (\ref{exp-a}) we have
\begin{eqnarray} \label{exp-b}
(\vec{v}_1)^{i_1} (\vec{v}_2)^{i_2} \cdots (\vec{v}_t)^{i_t} =
\vec{\bf 0} &\mbox{iff} & i_1 =i_2 = \cdots = i_t = 0.
\end{eqnarray}
The linear independence of $\vec{v}_1, \vec{v}_2, \ldots, \vec{v}_t$
implies that any non-empty subset of these vectors are also linearly
independent. Similar to expression (\ref{exp-b}), this further
implies that, for any $0\le j_i\le m_i$, $i=1, 2,\ldots, t$,
\begin{eqnarray}\label{exp-c}
(\vec{v}_1)^{i_1} (\vec{v}_2)^{i_2} \cdots (\vec{v}_t)^{i_t} =
(\vec{v}_1)^{j_1} (\vec{v}_2)^{j_2} \cdots (\vec{v}_t)^{j_t}
&\mbox{iff} & i_1 =j_1, i_2=j_2, \ldots, \mbox{ and } i_t = j_t.
\end{eqnarray}
Furthermore, since $p$ is prime and $m_i \in Z_p$, we have
\begin{eqnarray}\label{exp-d}
c(i_1,i_2,\ldots c_t) &= &(^{m_1}_{i_1})(^{m_2}_{i_2}) \cdots
(^{m_t}_{i_t})
(p-1)^{i_1+t_2\cdots+i_t} \pmod{p} \nonumber \\
    &\not=& 0 \pmod{p}
\end{eqnarray}
Combining expressions (\ref{exp-c}) and (\ref{exp-d}), we have
\begin{eqnarray}\label{exp-e}
&& ((p-1)\vec{v}_1 + \vec{v}_0)^m_i \cdots ((p-1)\vec{v}_t +
\vec{v}_0)^m_t \nonumber \\
&& = 1 \vec{\bf 0} + \left(\sum_{0\le i_j\le m_j,\  0\le j\le t,\
\mbox{\ and\ } i_1+i_2\cdots+i_t>0} c(i_1,i_2,\ldots,i_t) \cdot
((\vec{v}_1)^{i_1} (\vec{v}_2)^{i_2} \cdots (\vec{v}_t)^{i_t})
\right).
\end{eqnarray}
In the above expression (\ref{exp-e}), all the coefficients are
nonzero, and all the $(m_1+1)(m_2+1) \cdot (m_t+1) \le p^k$
vectors are distinct. Hence, expression (\ref{exp-lem4}) is
obtained.
\end{proof}

{\bf Remark.}  Lemma \ref{lem4} guarantees that replacing
variables in a $p$-monomial by linearly independent vectors will
prevent the monomial from being annihilated. Note that the total
number of distinct vectors in expression \ref{exp-lem4} is at most
$p^k$.

Lemmas \ref{lem3} and \ref{lem4} have laid a basis for designing
randomized algorithms to test $p$-monomials. One additional help
will be drawn from randomized polynomial identity testing over a
finite field. We are ready to present the algorithm and show how
to integrate group algebra with polynomial identity testing to aid
our design. To simplify description, we assume, like in Koutis
\cite{koutis08} and Williams \cite{williams09}, that the degree of
$p$-monomials in a polynomial is at least $k$, provided that
such monomials exist. Otherwise, we can simply multiply some new
variables to the given polynomial to satisfy the requirement.

\begin{theorem}\label{thm-1}
Let $p$ be a prime number. Let $F(x_1,x_2,\ldots,x_n)$ be an
$n$-variate polynomial of degree $k$ represented by an arithmetic
circuit $C$ of size $s(n)$. There is a randomized $O^*(p^k)$
time algorithm to test with high
probability whether $F$ has a $p$-monomial of degree $k$ in its
sum-product expansion.
\end{theorem}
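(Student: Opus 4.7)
The plan is to extend the group-algebra substitution technique of Koutis \cite{koutis08} and Williams \cite{williams09} to $Z_p[Z_p^k]$. For each variable $x_i$ I would introduce a fresh commuting weight variable $y_i$ together with a uniformly random vector $\vec{v}_i \in Z_p^k$, and perform the substitution
\[
x_i \;\longmapsto\; y_i \bigl( (p-1)\vec{v}_i + \vec{v}_0 \bigr)
\]
throughout the given arithmetic circuit $C$. The resulting value $G(y_1,\ldots,y_n)$ is a polynomial whose coefficients lie in $Z_p[Z_p^k]$; each gate can be simulated in $O^*(p^k)$ operations in $Z_p$, since every group-algebra element is stored as a vector of length $p^k$ and multiplication in $Z_p[Z_p^k]$ is convolution over the abelian group $Z_p^k$, computable in $O^*(p^k)$ time by a $k$-dimensional DFT.

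The core combinatorial argument is that the coefficient of $\vec{\bf 0}$ in $G$, viewed as an element $Q(y_1,\ldots,y_n)\in Z_p[y_1,\ldots,y_n]$, is the witness we want. For any monomial $\pi = x_{i_1}^{e_1}\cdots x_{i_r}^{e_r}$ in the sum-product expansion of $F$ with some $e_j \ge p$, Lemma \ref{lem3} forces $x_{i_j}^p = {\bf 0}$ in the algebra, so the contribution of $\pi$ to $G$ vanishes, and only $p$-monomials survive. For a surviving $p$-monomial $\pi$ of total degree $k$, Lemma \ref{lem4} gives that whenever the at most $k$ vectors $\vec{v}_{i_j}$ are linearly independent, the contribution of $\pi$ to the $\vec{\bf 0}$-coefficient of $G$ is the $y$-monomial $y_{i_1}^{e_1}\cdots y_{i_r}^{e_r}$ with coefficient $1$. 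A standard counting shows that independence holds with probability at least $\prod_{j\ge 1}(1 - p^{-j})$, a positive constant depending only on $p$. Since distinct $p$-monomials of $F$ carry distinct $y$-monomials, their contributions to $Q$ cannot cancel one another; hence $Q \not\equiv 0$ with constant probability whenever $F$ has a $p$-monomial of degree $k$, while $Q \equiv 0$ whenever $F$ has none.

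It then remains to detect whether $Q$ is identically zero. Because $Q$ is defined over the possibly tiny field $Z_p$, the Schwartz--Zippel test used by Williams does not directly apply; instead I would invoke the Agrawal--Biswas randomized polynomial identity test \cite{agrawal-biswas03}, which works by evaluating over an extension $\mbox{GF}(p^\ell)$ of $Z_p$, given only black-box access to the polynomial. Each such evaluation of $Q$ is obtained by running the substituted circuit in the lifted algebra $\mbox{GF}(p^\ell)[Z_p^k]$ at a randomly chosen point and reading off the $\vec{\bf 0}$-component, at cost $\mbox{poly}(n,k,\ell)\cdot p^k$; constantly many repetitions amplify both the linear-independence probability and the soundness of the identity test. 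The main technical hurdle I anticipate is keeping the extension degree $\ell$ small enough (using the total-degree bound $k$ on $Q$) so that lifting the group-algebra arithmetic from $Z_p$ to $\mbox{GF}(p^\ell)$ only costs a $\mbox{poly}(n,k)$ overhead, and verifying that the black-box interface supplied by the substituted circuit satisfies the precise input conditions of the Agrawal--Biswas test, so that the whole pipeline stays within the advertised $O^*(p^k)$ bound.
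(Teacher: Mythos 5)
Your proposal follows essentially the same route as the paper: substitute $x_i \mapsto ((p-1)\vec{v}_i + \vec{v}_0)$ for random $\vec{v}_i$, use Lemma \ref{lem3} to annihilate non-$p$-monomials and Lemma \ref{lem4} to show that a $p$-monomial whose vectors are linearly independent survives, tag monomials with auxiliary $y$-variables so that distinct surviving monomials cannot cancel, and finish with the Agrawal--Biswas identity test. Your three deviations are all benign. You work in dimension $k$ and settle for the constant success probability $\prod_{j\ge 1}(1-p^{-j})$ amplified by repetition, whereas the paper pads the dimension to $d = k+\log_p k+1$ to get probability $3/4$ in one pass; both are fine. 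You attach a tag $y_i$ to each \emph{variable} rather than to each multiplication gate as the paper (following Williams) does; for the purpose the tags serve here --- distinct monomials of $F$ must carry distinct $y$-monomials --- your scheme works and the distinctness claim is in fact more immediate, though note that neither scheme rescues a monomial whose own multiple occurrences in the expansion cancel modulo $p$. Finally, you read off only the $\vec{\bf 0}$-coefficient $Q$ rather than all $p^d$ coordinate polynomials $f_j$; this is sound because annihilated monomials contribute nothing to any coordinate while Lemma \ref{lem4} gives a surviving monomial coefficient $c_1=1$ at $\vec{\bf 0}$. The one point to tighten is your description of Agrawal--Biswas as a black-box evaluation test over $\mbox{GF}(p^\ell)$: it operates on a circuit by reducing to a univariate polynomial and testing divisibility by a random moderate-degree modulus, so you should run the substituted circuit with its scalar arithmetic lifted to $Z_p[z]/(q(z))$, which costs only a $\mbox{poly}(n,k)$ overhead on the $p^k$-size group-algebra representation and keeps the total within $O^*(p^k)$.
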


\begin{proof}
Let $d = k+\log_p k + 1$, we consider the group algebra
$Z_p[Z^d_p]$. As in Williams \cite{williams09}, we first expand
the circuit $C$ to a new circuit $C'$ as follows. For each
multiplication gate $g_i$, we attach a new gate $g'_i$ that
multiplies the output of $g_i$ with a new variable $y_i$, and feed
the output of $g'_i$ to the gate that reads the output of $g_i$.
Assume that $C$ has $h$  multiplications gates. Then, $C'$ will
have $h$ new multiplications gates corresponding to new variables
$y_1, y_2, \ldots, y_h$. Let $F'(y_1, y_1, \ldots, y_h, x_1, x_2,
\ldots, x_n)$ be he new polynomial represented by $C'$. The
algorithm for testing whether $F$ has a $p$-monomial of degree $k$
is given in the following.

\begin{quote}
Algorithm \mbox{RT-MLM} (\underline{R}andomized
\underline{T}esting of \underline{M}ulti\underline {l}inear
\underline{M}onomials):
\begin{enumerate}
\item Select uniform random vectors $\vec{v}_1,\ldots,\vec{v}_n
\in Z^d_p-\{\vec{\bf 0}\}$. \item Replace each variable $x_i$ with
$(\vec{v}_i + \vec{v}_0)$, $1\le i \le n$.
 \item Use $C'$ to calculate
\begin{eqnarray}\label{exp-thm-rt-mlm}
F'(y_1,\ldots,y_h,(\vec{v}_1+\vec{v}_0),\ldots,(\vec{v}_n+\vec{v}_0))
& = & \sum_{j=1}^{2^d} f_j(y_1,\ldots,y_h) \cdot \vec{z}_j,
\end{eqnarray}
where each $f_j$ is a polynomial of degree $k$ over the finite
field $Z_p$, and $\vec{z}_j$ with $1\le j\le 2^d$ are the $2^d$ distinct vectors in
$Z^d_p$. \item Perform polynomial identity testing with the
Agrawal and Biswas algorithm \cite{agrawal-biswas03} for every
$f_{j}$ over $Z_p$. Return {\em "yes"} if one of them is not
identical to zero, or {\em "no"} otherwise.
\end{enumerate}
\end{quote}

It follows from Lemma \ref{lem3} that all monomials that are not $p$-monomials in
$F$ (and hence in $F'$) will become zero, when variables $x_i$ is
replaced by $(\vec{v}_i + \vec{v}_0)$ at Step ii. We shall
estimate that with high probability some $p$-monomials will
survive from those replacements, i.e., will not become the zero
element $\bf{0}$ in $Z_p[Z^d_p]$.

Consider any given $p$-monomial $\pi = x_{i_1}^{m_1}\cdots
x_{i_t}^{m_t}$ of degree $k$ with  $1\le m_i < p$ and $k= m_1
+\cdots+m_t$, $i=1,\ldots,t$. For any $1\le j \le t$,
\begin{eqnarray}
\mbox{Pr}\left[\vec{v}_j \in \mbox{span}(\vec{v}_{i_1},\ldots,
\vec{v}_{i_{j-1}})\right] & = & \frac{p^{j-1}}{p^d}, \nonumber
\end{eqnarray}
since $|\mbox{span}(\vec{v}_{i_1},\ldots, \vec{v}_{i_{j-1}})| =
p^{j-1}$ and $|Z_p^d| = p^d$. Hence,
\begin{eqnarray}\label{prop}
& & \mbox{Pr}\left[(\exists j \in \{1,\ldots, t\}) [ \vec{v}_{i_j} \in
\mbox{span}(\vec{v}_{i_1},\ldots, \vec{v}_{i_{j-1}})]\right] \nonumber \\
& & = \mbox{Pr}\left[ [\vec{v}_1 = \vec{\bf 0}] \vee [ \vec{v}_{i_2}
\in \mbox{span}(\vec{v}_{i_1})] \vee
\cdots \vee [ \vec{v}_{i_t} \in \mbox{span}(\vec{v}_{i_1},\ldots,
\vec{v}_{i_{t-1}})] \right] \nonumber \\
& & \le \mbox{Pr}[ \vec{v}_1 = \vec{\bf 0}] +
\mbox{Pr}[\vec{v}_{i_2} \in \mbox{span}(\vec{v}_{i_1})]
+ \cdots + \mbox{Pr}[\vec{v}_{i_t} \in
\mbox{span}(\vec{v}_{i_1},\ldots, \vec{v}_{i_{t-1}})] \nonumber \\
&& = \frac{p^0}{p^d} + \frac{p^1}{p^d} + \cdots +
\frac{p^{t-1}}{p^d} \le  t \frac{p^{t-1}}{p^d} \nonumber \\
& & \le k \frac{p^{k-1}}{p^{k+\log_p k +1}}  \le \frac{1}{p^2} \le
\frac{1}{4}.
\end{eqnarray}
Because $\vec{v}_{i_1},\ldots,\vec{v}_{i_t}$ are linearly
independent iff there is no $\vec{v}_{i_j} \in
\mbox{span}(\vec{v}_{i_1},\ldots, \vec{v}_{i_{j-1}})$, by
expression (\ref{prop}) the probability that
$\vec{v}_{i_1},\ldots,\vec{v}_{i_t}$ are linearly independent is
at least $\frac{3}{4}$. This implies, by Lemma \ref{lem4}, that
the monomial $\pi$ will survive from the replacements at Step ii
with probability at least $\frac{3}{4}$. Furthermore, by
expression (\ref{exp-lem4}) in Lemma \ref{lem4},
\begin{eqnarray}\label{new-poly}
((p-1)\vec{v}_1 + \vec{v}_0)^{m_i} \cdots ((p-1)\vec{v}_t +
\vec{v}_0)^{m_t} &=& \sum^{p^k}_{i=1}c(\pi)_i \vec{u}_i(\pi),
\end{eqnarray}
where $c(\pi)_i$ are coefficients in $Z_p$ such that $(m_1+1)(m_2+1)\cdots (m_t+1)$
 of them are nonzero, and
$\vec{u}_i(\pi)$ are distinct vectors in $Z^d_p$. Let $\psi(\pi)$
be the product of the new variables $y_j$ that are added with
respect to the gates in $C$ such that those gates produce the
monomial $\pi$. Then, $\psi(\pi)$ is a monomial that is generated
by $C'$. Hence, at Step iii, by expression (\ref{new-poly}) $F'$
will have monomials respect to $\pi$ as given in the following expansion:
\begin{eqnarray}\label{new-monomial}
\phi(\pi) & = & \psi(\pi) \cdot ((p-1)\vec{v}_1 + \vec{v}_0)^{m_i}
\cdots ((p-1)\vec{v}_t + \vec{v}_0)^{m_t} \nonumber \\
   &=& \sum^{p^k}_{i=1}c(\pi)_i \cdot \psi(\pi) \cdot  \vec{u}_i(\pi).
\end{eqnarray}
Let ${\cal S}$ be the set of all those $p$-monomials that survive
from the variable replacements. Then,
\begin{eqnarray}\label{new-F}
&&F'(y_1,\ldots,y_h,(\vec{v}_1+\vec{v}_0),\ldots,(\vec{v}_n+\vec{v}_0))
= \sum_{\pi \in {\cal S}} \phi(\pi)  \nonumber \\
& & = \sum_{\pi\in {\cal S}} \left(\sum^{p^k}_{i=1}c(\pi)_i \cdot
\psi(\pi) \cdot \vec{u}_i(\pi)\right) \nonumber \\
& & = \sum_{j=1}^{2^d} \left(\sum_{\pi\in {\cal S} \mbox{ and }
\vec{z}_j = \vec{u}_i(\pi)} c(\pi)_i \cdot \psi(\pi) \right) \cdot
\vec{z}_j
\end{eqnarray}
Let
\begin{eqnarray}
f_j(y_1,\ldots,y_h) & = & \sum_{\pi\in {\cal S} \mbox{ and }
\vec{z}_j = \vec{u}_i(\pi)} c(\pi)_i \cdot \psi(\pi), \nonumber
\end{eqnarray}
then the degree $k$ polynomial with respect to $\vec{z}_j$ is
obtained for $F'$ in expression (\ref{exp-thm-rt-mlm}).

Recall that when constructing the circuit $C'$, each new gate is
associated with a new variable. This means that for any two
monomials $\pi'$ and $\pi''$ in $F$, we have $\psi(\pi')\not=
\psi(\pi'')$. This implies that we cannot add $c(\pi') \cdot
\psi(\pi')$ to $c(\pi'') \cdot \psi(\pi'')$ in $f_j$. Thus, the
possibility of a {\rm "zero-sum"} of coefficients from different
surviving monomials is completely avoided during the construction
of $f_j$. Therefore, conditioned on that ${\cal S}$ is not empty,
$F'$ must not be identical to zero, i.e., there exists at least one
$f_j$ that is not identical to zero. At Step iv, we use the
randomized algorithm by Agrawal and Biswas \cite{agrawal-biswas03}
to test whether $f_j$ is identical to zero. It follows from
Theorem 4.6 in Agrawal and Biswas \cite{agrawal-biswas03} that
this testing can be done with probability at least $\frac{5}{6}$
in time polynomially in $s(n)$ and $\log q$. Since ${\cal S}$ is
not empty with probability at least $\frac{3}{4}$, the probability
of overall success of testing whether $F$ has a $p$-monomial is at
least $\frac{5}{8}$.

Finally, we address the issues about how to calculate $F'$ and the
time needed to do so. Naturally, every element in the group
algebra $Z_p[Z^d_p]$ can be represented by a vector in
$Z^{p^d}_p$. Adding two elements in $Z_p[Z^d_p]$ is equivalent to
adding the two corresponding vectors in $Z_p^{p^d}$, and the
latter can be done in $O(p^d \log p)$ time via component-wise sum.
In addition, multiplying two elements in $Z_p[Z^d_p]$ is
equivalent to multiplying the two corresponding vectors in
$Z_p^{p^d}$, and the latter can be done in $O(dp^d\log^2 p)$ with
the help of a similar Fast Fourier Transform style algorithm as in
Williams \cite{williams09}. Calculating $F'$ consists of $s(n)$
arithmetic operations of either adding or multiplying two elements
in $Z_p[Z^d_p]$ based on the circuit $C$ or $C'$. Hence, the total
time needed is $O(s(n) d p^d log ^2 p)$. At Step iv, we run the
Agrawal and Biswas \cite{agrawal-biswas03} algorithm to $F'$ to
simultaneously testing whether there is one $f_j$ such that $f_j$ is
not identical to zero. We choose a probability $\frac{5}{6}$, the
by Theorem 4.6 in Agrawal and Biswas \cite{agrawal-biswas03}, this
testing can be done in $O^*((s(n))^4 n^4 log^2 p)$ time, suppressing a
$\mbox{poly}(\log s(n), \log n, \log \log p)$ factor.
Recall that $d = k+log_p k +1$. The total time for the entire
algorithm is $O^*(p^k)$.
\end{proof}

\section{Derandomization}

In this section, we turn our attention to formulas instead of
general arithmetic circuits and shall design a deterministic
algorithm to test $p$-monomials for polynomials represented by a
formula. Recall that the algorithm RT-MLM has only two randomized processes
at Step i to select $n$ uniform random variables and
at Step iv to test whether one $f_j$ from $F'$ is identical to
zero over $Z_p$. In this section, we shall derandomize these two
randomized processes respectively with the help of two advanced techniques of
perfect hashing by Chen {\em at al.} \cite{jianer-chen07} and Naor {\em
at el.} \cite{naor95} and noncommunicative multivariate
polynomial identity testing by Raz and Shpilka \cite{raz05}.

Let $n$ and $k$ be two integers such that $1\le k\le n$. Let
${\cal A} =\{1, 2, \ldots, n\}$ and ${\cal K} = \{1, 2, \ldots,
k\}$. A $k$-coloring of the set ${\cal A}$ is a function from
${\cal A}$ to ${\cal K}$. A collection ${\cal F}$ of
$k$-colorings of ${\cal A}$ is a $(n,k)$-family of {\em perfect
hashing functions} if for any subset $W$ of $k$ elements in ${\cal
A}$, there is a $k$-coloring $h \in {\cal F}$ that is injective
from $W$ to ${\cal K}$, i.e., for any $x, y \in W$, $h(x)$ and
$h(y)$ are distinct elements in ${\cal K}$.

\begin{theorem}\label{thm-2}
Let $p$ be a prime number. Let $F(x_1,x_2,\ldots,x_n)$ be an
$n$-variate polynomial of degree $k$ represented by a formula $C$
of size $s(n)$. There is a deterministic $O(6.4^k + p^k)$ time
algorithm to test whether $F$
has a $p$-monomial of degree $k$ in its sum-product expansion.
\end{theorem}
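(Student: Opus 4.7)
The plan is to derandomize the two probabilistic steps of algorithm RT-MLM from the proof of Theorem~\ref{thm-1}: the uniform random selection of vectors $\vec{v}_i$ at Step~i, and the Agrawal-Biswas randomized polynomial identity test at Step~iv. The first will be replaced by iterating over a deterministic perfect hashing family; the second by the Raz-Shpilka deterministic noncommunicative PIT, which is the reason the statement restricts to formulas rather than general circuits.

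For Step~i, I would work directly in $Z_p[Z_p^k]$ (taking $d=k$) and fix a $(n,k)$-family $\mathcal{F}$ of perfect hashing functions of size $O^*(6.4^k)$ from Chen \emph{et al.} \cite{jianer-chen07} (using the construction of Naor \emph{et al.} \cite{naor95} as the underlying primitive). For each $h \in \mathcal{F}$ we set $\vec{v}_i = \vec{e}_{h(i)}$, the $h(i)$-th standard basis vector in $Z_p^k$. For any $p$-monomial $\pi = x_{i_1}^{m_1}\cdots x_{i_t}^{m_t}$ of degree $k$ (so $t \le k$), the perfect-hashing property yields some $h \in \mathcal{F}$ that is injective on the support $\{i_1,\ldots,i_t\}$; under this $h$ the vectors $\vec{v}_{i_1},\ldots,\vec{v}_{i_t}$ are distinct standard basis vectors, hence linearly independent, and Lemma~\ref{lem4} guarantees that $\pi$ survives the substitutions $x_i \leftarrow (p-1)\vec{v}_i + \vec{v}_0$. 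Thus enumerating all $h \in \mathcal{F}$ is enough to catch any target $p$-monomial.

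For Step~iv, I would invoke the Raz-Shpilka algorithm \cite{raz05} on the formula $C'$ (with the group-algebra substitutions carried symbolically through the computation). The crucial observation, already used in Theorem~\ref{thm-1}, is that each multiplication gate of $C$ is decorated with its own fresh variable $y_i$, so the monomials $\psi(\pi)$ in $y_1,\ldots,y_m$ attached to distinct surviving $\pi$'s are themselves distinct. Consequently no cancellation across different surviving $\pi$ is possible, and the relevant coordinate polynomial $f_j$ is nonzero as a commutative polynomial in the $y_i$ iff it is nonzero as a noncommunicative polynomial in the $y_i$. Since $C'$ is a formula, Raz-Shpilka tests this in deterministic polynomial time for each of the $p^k$ coordinates of $Z_p[Z_p^k]$.

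The running-time accounting is what I expect to be the most delicate point: we pay roughly $O^*(p^k)$ once for evaluating and storing elements of $Z_p[Z_p^k]$ (each group-algebra operation costs $O^*(p^k)$ via FFT-style convolution as in Theorem~\ref{thm-1}) and we pay $O^*(6.4^k)$ for iterating over the hashing family, with the Raz-Shpilka test contributing only polynomial factors hidden in $O^*(\cdot)$. The main obstacle will be arranging the computation so that these two exponential costs combine additively, as claimed in the statement, rather than multiplicatively; this should follow by sharing the formula structure across hash trials and performing the PIT directly on the symbolic output of $C'$ instead of re-expanding the group-algebra representation for each hash function.
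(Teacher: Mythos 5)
Your proposal matches the paper's proof essentially step for step: work in $Z_p[Z_p^k]$ with $d=k$, iterate over the Chen \emph{et al.} $(n,k)$-family of perfect hashing functions of size $O^*(6.4^k)$, assign each variable one of $k$ fixed linearly independent vectors via the hash function so that Lemma~\ref{lem4} guarantees survival of a target $p$-monomial under an injective hash, and replace the Agrawal--Biswas test by the Raz--Shpilka deterministic noncommunicative PIT on the formula $C'$ (with the fresh $y_i$ variables preventing cross-monomial cancellation). The timing concern you flag is legitimate but does not put you behind the paper: the loop as written costs $O^*(6.4^k\cdot p^k)$ on its face, and the paper itself simply asserts the additive bound $O^*(6.4^k+p^k)$ without addressing this.
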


\begin{proof}
As in the proof of Theorem \ref{thm-1}, we consider the group
algebra $Z_p[Z^k_p]$. Here, we do not need to expand the dimension
$k$ to $d>k$.  We also construct a new formula $C'$ from $C$ by
adding new variable $y_i$ for each multiplication gate $g_i$ in
the same way as what we did for Theorem \ref{thm-1}. Assume that
$C$ has $h$ many multiplication gates, then $C'$ will have $h$ new
multiplication gates corresponding to new variables $y_1, y_2,
\ldots, y_h$. The algorithm for testing whether $F$ has a
$p$-monomial of degree $k$ is given as follows.

\begin{quote}
Algorithm \mbox{DT-MLM} (\underline{D}eterministic
\underline{T}esting of \underline{M}ulti\underline{l}inear
\underline{M}onomials):
\begin{enumerate}
\item Construct with the algorithm by Chen {\em at el.}
\cite{jianer-chen07} an $(n,k)$-family of perfect hashing functions
${\cal H}$ of size $O(6.4^k\log^2 n)$. \item Select $k$ linearly
independent vectors $\vec{v}_1,\ldots,\vec{v}_k \in Z^k_p$. (No
randomization is needed at this step.) \item For each perfect
hashing function $\tau \in{\cal H}$ do
\begin{quote}
a. For each variable $x_i$, replace it by $(\vec{v}_{\tau(i)} +
\vec{v}_0)$.

b. Use $C'$ to calculate
\begin{eqnarray}\label{exp-thm-dt-mlm}
&
&F'(y_1,\ldots,y_h,(\vec{v}_1+\vec{v}_0),\ldots,(\vec{v}_n+\vec{v}_0))
\nonumber \\
&& =  \sum_{j=1}^{2^k} f_j(y_1,y_2,\ldots,y_h) \cdot \vec{z}_j,
\end{eqnarray}
where each $f_j$ is a polynomial of degree $k$ over the finite
field $Z_p$, and vectors $\vec{z}_j$ with $1\le j\le 2^k$ are the $2^k$ distinct vectors in
$Z^k_p$.

c. Perform polynomial identity testing with the Raz and Shpilka
algorithm \cite{raz05} for every $f_j$ over $Z_p$. Stop
and return {\em "yes"} if one of them is not identical to zero.
\end{quote}
\item[iv.] If all perfect hashing functions in ${\cal H}$ have
been tried without returning {\em "yes"}, then stop and output
{\em "no"}.
\end{enumerate}
\end{quote}

By Chen {\em at el.}\cite{jianer-chen07}, Step i can be done in $O(6.4^k n \log^2 n)$
times. Step ii can be easily done in $O(k^2\log p)$ time.

It follows from Lemma \ref{lem3} that all those monomials that are not $p$-monomials in
$F$, and hence in $F'$, will be annihilated, when variables $x_i$
are replaced by $(\vec{v}_i + \vec{v}_0)$ at Step iii.a.

Consider any given $p$-monomial $\pi = x_{i_1}^{m_1}\cdots
x_{i_t}^{m_t}$ of degree $k$ with  $1\le m_i < p$ and $k= m_1
+\cdots+m_t$, $i=1,\ldots,t$. Because of the nature of ${\cal H}$,
there is at least one perfect hashing function $\tau$ in ${\cal
H}$ such that $\tau(i_{j'}) \not= \tau(i_{j''})$ if $i_{j'} \not=
i_{j''}$, $1\le j', j'' \le t \le k$. This means that
$\vec{v}_{\tau(i_1)}, \ldots, \vec{v}_{\tau(i_t)}$ are distinct
and hence linearly independent. By Lemma \ref{lem4}, $\pi$ will
survive from the replacements at Step iii.a. Let ${\cal S}$ be the
set of all surviving $p$-monomials. Following the same analysis as
in the proof of Theorem \ref{thm-1}, we have $F'$ that is not
identical to zero if ${\cal S}$ is not empty. That is, there is at
least one $f_j$ that is not identical to zero, if ${\cal S}$ is
not empty. Moreover, the time needed for calculating $F'$ is
$O(kp^k\log^2 p)$.

We now consider imposing noncommunicativity on $C'$ as follows.
Inputs to an arithmetic gate are ordered so that the formal
expressions $y_{i_1}\cdot y_{i_2} \cdot \cdots \cdot y_{i_r}$ and
$y_{j_1}\cdot y_{j_2} \cdot \cdots \cdot y_{j_l}$ are the same iff
$r=l$ and $i_q =j_q$ for $q=1,\ldots,r$. Finally, we use the
algorithm by Raz and Shpilka \cite{raz05} to test whether
$f_j(y_1,\ldots,y_h)$ is identical to zero of not. This can be
done in time polynomially in $s(n)$ and $n$, since $f_j$ is a
non-communicative polynomial represented by a formula.

Combining the above analysis, the total time of the algorithm
DT-MLM is $O(6.4^k n \log^2 n + k p^k (s(n) n)^{O(1)}\log^2 p) = O^*(6.4^k + p^k)$.
\end{proof}

\section{$\Pi_m\Sigma_2\Pi_t \times \Pi_k\Sigma_3$ Polynomials}

It has been proved by Chen and Fu \cite{chen-fu10} that the
problem of testing monomials in $\Pi_m\Sigma_s$ polynomials is
solvable in $(ms\sqrt{m+s})$ time, and in $\Pi_m\Sigma_2\Pi_t$
polynomials is in $O((mt)^2)$ time. On the other hand, it has also been
proved by in \cite{chen-fu10} that the problem for
$\Pi_m\Sigma_3$ and $\Pi_m\Sigma_2\Pi_t \times \Pi_k\Sigma_3$
polynomials is respectively NP-complete. Moreover, a $O(tm^2
1.7751^m)$ time algorithm was obtained for $\Pi_m\Sigma_3\Pi_t$
polynomials, and so was a $O((mt)^2 3^k)$ algorithm obtained for
$\Pi_m\Sigma_2\Pi_t \times \Pi_k\Sigma_3$ polynomials.  In this
section, we shall devise two parameterized algorithms, one
deterministic and the other randomized, for testing multilinear
monomials in $\Pi_m\Sigma_2\Pi_t \times \Pi_k\Sigma_3$
polynomials, improving the $O((mt)^2 3^k)$ upper bound in
\cite{chen-fu10}.

\begin{theorem}\label{yang-thm-1}
There is a deterministic algorithm of time $O(((mt+k)^2+k) 2^k)$
to test whether any $\Pi_m\Sigma_2\Pi_t\times\Pi_k\Sigma_3$
polynomial has a multilinear monomial in its sum-product
expansion.
\end{theorem}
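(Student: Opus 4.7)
The plan is to reduce the problem, branch by branch, to multilinear monomial testing on a $\Pi\Sigma_2\Pi_t$ polynomial, which by Chen and Fu \cite{chen-fu10} is solvable in quadratic time. Write the input as $F = f_1 \cdot f_2$, where
\[
f_1 = \prod_{i=1}^{m}(A_i + B_i) \qquad \text{and} \qquad f_2 = \prod_{j=1}^{k}(\ell_{j,1} + \ell_{j,2} + \ell_{j,3}),
\]
with $f_1$ being $\Pi_m\Sigma_2\Pi_t$ and $f_2$ being $\Pi_k\Sigma_3$. Every multilinear monomial appearing in the sum-product expansion of $F$ must pick exactly one of $\ell_{j,1}, \ell_{j,2}, \ell_{j,3}$ from each factor of $f_2$. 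A naive enumeration over these choices gives $3^k$ branches, which is the source of the $O((mt)^2 3^k)$ bound in \cite{chen-fu10}; the improvement I aim for comes from replacing this with a binary branching.

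The idea is a two-way split per $\Sigma_3$-clause: either the chosen literal is $\ell_{j,1}$, or it is one of $\ell_{j,2}, \ell_{j,3}$. For each $\sigma \in \{0,1\}^k$, define
\[
F_\sigma \;=\; f_1 \cdot \prod_{j:\,\sigma_j = 0}\ell_{j,1} \cdot \prod_{j:\,\sigma_j = 1}(\ell_{j,2} + \ell_{j,3}).
\]
Viewed clause by clause, $F_\sigma$ is a $\Pi_{m+k}\Sigma_2\Pi_t$ polynomial (each appended factor is either a single literal, i.e.\ a degenerate $\Sigma_2$, or a sum of two literals of degree $1 \le t$), with total variable-occurrence size bounded by $mt + k$. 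Hence the Chen and Fu \cite{chen-fu10} quadratic algorithm tests $F_\sigma$ for a multilinear monomial in $O((mt+k)^2)$ time. The main algorithm simply enumerates all $2^k$ choices of $\sigma$, builds $F_\sigma$ in $O(k)$ additional time per branch, invokes the base-case test, and returns ``yes'' as soon as any $F_\sigma$ contains a multilinear monomial; otherwise it returns ``no''.

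For correctness, every monomial in $F_\sigma$ arises from choices within the original clauses of $F$, so a ``yes'' answer in any branch is valid. Conversely, if $F$ has a multilinear monomial $\pi$, then in its sum-product expansion $\pi$ corresponds to a specific selection of one literal from each $\Sigma_3$-clause; setting $\sigma_j = 0$ when that literal is $\ell_{j,1}$ and $\sigma_j = 1$ otherwise produces a $\sigma$ for which $\pi$ still appears as a multilinear monomial in $F_\sigma$. Summing $O((mt+k)^2 + k)$ over $2^k$ branches yields the stated bound. The main point to get right — and essentially the only obstacle — is to verify that this binary split really covers every possible multilinear monomial of $F$, so that 2 branches per clause (rather than 3) suffice; this is exactly what makes the construction tailored to $\Sigma_3$ factors, since for larger $\Sigma_r$ one would need an additional idea to avoid falling back to $r$-way branching.
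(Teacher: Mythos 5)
Your proposal is correct and is essentially the paper's own argument: both perform a binary split on each $\Sigma_3$ clause (one designated literal versus the sum of the other two), absorb the resulting degree-$1$ factors into the $\Pi\Sigma_2\Pi_t$ part, and invoke the quadratic Chen--Fu test on each of the $2^k$ resulting $\Pi_{m+k}\Sigma_2\Pi_t$ polynomials. The only cosmetic differences are that the paper phrases the split as a recursion (branching on whether a chosen variable is selected, and eliminating that variable from the remaining clauses) rather than as an upfront enumeration over $\sigma\in\{0,1\}^k$, which does not change correctness or the stated bound.
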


\begin{proof}
Let $F = F_1 \cdot F_2$ such that $F_1 = f_1\cdots f_m$ is a
$\Pi_m\Sigma_2\Pi_t$ polynomial and $F_2=g_1\cdots g_k$ is a
$\Pi_k\Sigma_3$ polynomial, where $f_i = (T_{i1}+T_{i2})$ and
$g_j=(x_{j1}+x_{j2}+x_{j3})$, $1\le i\le m$, $1\le j\le k$.

Consider variable $x_{11}$ in the clause $g_1$. We devise a branch
and bound process to divide  the testing for $F$ into the testing
for two new polynomials.   We eliminate all $x_{11}$ in $g_j$ for
$j=1, \ldots, k$. Let $g'_j$ be the clause resulted from  $g_j$
after the eliminating process. Let $h_1 = F_1 \cdot g'_1$, $h_2 = F_1
\cdot x_{11}$, $q = g'_2 \ldots g'_k$. Note that exactly one of
the three variable $x_{11}, x_{12}$ and $x_{13}$ in the clause
$g_1$ must be selected to form a monomial (hence a
multilinear monomial) for $F$  in the sum-product expansion of $F$. We have
two cases concerning the selection of $x_{11}$:

(1) $x_{11}$ can not be selected to help form any multilinear
monomial. In this case, $F$ has a multilinear monomial, iff $h_1
\cdot q$ has a multilinear monomial.

(2) $x_{11}$ can be selected to form a multilinear monomial. Thus,
$F$ has a multilinear monomial, iff $h_2 \cdot q$ has a
multilinear monomial.

In either case, the new polynomial is a product of two polynomials
with the first being a  $\Pi_{m+1}\Sigma_2\Pi_t$ polynomial and
the second a $\Pi_k\Sigma_3$ polynomial. Furthermore, the second
is the common $q$, which has one fewer clause than $F_2$.

Let $T(k)$ denote the time for testing multilinear monomials in
$F$. Notice that the eliminating process for $x_{11}$ takes $O(k)$
time. Then, $T(k)$ is bounded as follows
$$
T(k) \le 2T(k-1) + O(k) \le 2^k(T(0) + O(k)).
$$
$T(0)$ is the time to test multilinear monomials in a
$\Pi_{m+k}\Sigma_2\Pi_t$ polynomial with a size of $O(mt+k)$. By the
algorithm in \cite{chen-fu10} for this type of polynomials, $T(0)
= O((mt+k)^2)$. Therefore, $T(k) = O(((mt+k)^2 + k) 2^k)$.
\end{proof}

We now show that the upper bound in the above theorem can be further improved via
randomization.

\begin{theorem}\label{yang-thm-2}
There is a $O((mt+k)^2 1.5^k))$ time randomized algorithm that
finds a multilinear monomial for any
$\Pi_m\Sigma_2\Pi_t\times\Pi_k\Sigma_3$ polynomial with
probability at least $1-\frac{1}{e}$ if such monomials exist, or
returns {\em "no''} otherwise.
\end{theorem}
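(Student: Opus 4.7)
The plan is to replace the exhaustive branching on each clause $g_j$ of $F_2$ in Theorem~\ref{yang-thm-1} (which costs a factor of $2$ per clause) by a random 2-out-of-3 selection (expected factor $3/2$ per clause). Write $F = F_1 \cdot F_2$ with $F_1$ a $\Pi_m\Sigma_2\Pi_t$ polynomial and $F_2 = g_1 \cdots g_k$, where $g_j = x_{j1}+x_{j2}+x_{j3}$.

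The randomized subroutine is as follows. For each $j = 1,\ldots,k$, independently and uniformly at random discard one of the three variables in $g_j$, leaving a two-term sum $g'_j$. Form the reduced polynomial $F' = F_1 \cdot g'_1 \cdots g'_k$, which is a $\Pi_{m+k}\Sigma_2\Pi_t$ polynomial (each $g'_j$ is itself a $\Sigma_2\Pi_1$ factor). Then run the $O((mt+k)^2)$ algorithm of Chen and Fu \cite{chen-fu10} on $F'$ to detect and extract a multilinear monomial, if any. Repeat the subroutine $N = \lceil (3/2)^k \rceil$ times, returning the first multilinear monomial found, or \emph{"no''} if all $N$ trials fail.

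Correctness rests on two simple observations. First, deleting a summand from a clause only \emph{shrinks} the sum-product expansion, so every multilinear monomial of $F'$ is already a multilinear monomial of $F$; this rules out false positives. Second, suppose $F$ does have a multilinear monomial $\pi$ that selects $x_{j,a_j}$ from each clause $g_j$. The probability that the random trimming of $g_j$ retains $x_{j,a_j}$ is $2/3$, and these events are independent across $j$, so $\pi$ survives in $F'$ with probability $(2/3)^k$. Using the standard inequality $(1-p)^{1/p} \le 1/e$ for $0 < p < 1$ with $p = (2/3)^k$, the probability that all $N = \lceil (3/2)^k \rceil$ independent trials miss $\pi$ is at most $1/e$, giving overall success probability at least $1 - 1/e$.

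The total running time is $N$ trials times $O(k) + O((mt+k)^2)$ per trial, which simplifies to $O((mt+k)^2 \cdot 1.5^k)$, matching the claimed bound. The only step requiring a brief check is that the base-case algorithm of \cite{chen-fu10} for $\Pi_m\Sigma_2\Pi_t$ can be arranged to output an explicit monomial witness rather than just a yes/no decision; this is a routine enhancement via backtracking through their construction, and is the only place where the present argument goes beyond a probability calculation layered on top of the branching idea in Theorem~\ref{yang-thm-1}.
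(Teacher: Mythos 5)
Your proposal is correct and follows essentially the same route as the paper's own proof: randomly trim each three-term clause $g_j$ to a two-term clause, reduce to the $\Pi_{m+k}\Sigma_2\Pi_t$ base case of Chen and Fu, and repeat $(3/2)^k$ times to amplify the $(2/3)^k$ survival probability to $1-1/e$. Your extra remark about extracting an explicit monomial witness from the base-case algorithm is a reasonable observation that the paper itself glosses over.
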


\begin{proof}
Like in Theorem \ref{yang-thm-1}, let $F = F_1 \cdot F_2$ such
that $F_1 = f_1\cdots f_m$ is a $\Pi_m\Sigma_2\Pi_t$ polynomial
and $F_2=g_1\cdots g_k$ is a $\Pi_k\Sigma_3$ polynomial with $f_i
= (T_{i1}+T_{i2})$ and $g_j=(x_{j1}+x_{j2}+x_{j3})$.

Assume that $F$ has a multilinear monomial $\pi$. Then, one of the
three variables in $g_j$ must be included in $\pi$, $1\le j \le
k$. We uniformly select two distinct variables $y_{j1}$ and
$y_{j2}$ from $g_j$, then  $g'_j = (y_{j1}+y_{j2})$ contains a
desired variable for $\pi$ with a probability at least $2/3$. Let
$$
F'=F_1\cdot (g'_1 \cdots g'_k),
$$
then $F'$ has a multilinear monomial with a probability at least
$(\frac{2}{3})^k$. On the other hand, if $F$ does not have any
multilinear monomials in its sum-product expansion, then $F'$ must
not have any multilinear monomials. Notice that $F'$ is a
$\Pi_{m+k}\Sigma_2\Pi_t$ polynomial with a size of $O(mt+k)$. By
the algorithm for this type of polynomials by Chen and Fu in
\cite{chen-fu10}, one can find  a multilinear
monomial in $F'$ in time $O((mt+k)^2)$. In other words, the above
randomized process will fail to find a multilinear monomial in $F$
with a probability of at most $1-\left(\frac{2}{3}\right)^k$ if such monomials
exist, or return {\em "no''} otherwise.

Repeat the above randomized process $\left(\frac{3}{2}\right)^k$ many times.
If $F$ has multilinear monomials, then these processes will fail to
find one with a probability of at most
$$
\left[1-\left(\frac{2}{3}\right)^k\right]^{\left(\frac{3}{2}\right)^k} < \frac{1}{e}.
$$
Hence, the processes will find a multilinear monomial in $F$ with a
probability of at least $1-\frac{1}{e}$. If $F$ does not have any
multilinear monomial, then none of these repeated processes will
find one in $F$. The total time of all the repeated processes is
$O((mt+k)^2 1.5^k)$.
\end{proof}

It is justified in \cite{chen-fu10} that the resemblance of
$\Pi\Sigma\Pi$ and $\Pi\Sigma$ polynomials with SAT formulas is
{\em "superficial"}. For example, The multilinear monomial testing
problem for $\Pi_m\Sigma_3\Pi_1$ polynomials is in P, but 3SAT is
NP-complete. As another example to show such superficial
resemblance, one might consider to apply Sch\"{o}ning's algorithm
for 3SAT \cite{schoning02} to the multilinear monomial testing
problem. However, this is problematic. For the 3SAT problem, it is
easy to find an unsatisfied 3-clause. On the other hand, for the
multilinear monomial testing problem, we do not know which term in
which clause leads to a confliction. Therefore, it is difficult to
decide the change of the hamming distance between the current
solution and any target solution. This difficulty constitutes a
major barrier towards applying the Sch\"{o}ning's algorithm to
monomial testing.

\section{W[1]-Hardness}

Although deterministic and randomized parameterized algorithms
have been devised for testing monomials in previous three sections as well as
in \cite{koutis08,williams09,chen-fu10}, yet we shall prove in
this section that testing some special type of monomials in polynomials
represented by arithmetic circuits is  not
fixed-parameter tractable, unless some unlikely collapse occurs in
the fixed parameter complexity theory.

One shall notice that difference between the general
monomial testing and the specific monomial testing. The former asks for the existence of
{\em "any one''} from a set of possibly  many monomials that are needed. The latter asks for
{\em "a specific one''} from the set. For example, there may be $2^n -1$ multilinear monomials in
the sum-product expansion of a  $n$-variate polynomials. Testing for any one from these many monomials is
certainly different from testing for a specific one, say, $x_1x_3x_7x_{11}$.

Downey and Fellows \cite{downey-fellows95} have established a
hierarchy of parameterized complexity, named the W hierarchy, and
proved that the $k$-Clique problem is W[1]-hard.

\begin{definition}
Let $C = \{i_1, i_2, \ldots, i_k\}$ be a set of $k$ positive integers.
A $k$-clique monomial with respect
to $C$ is the multilinear monomial $\prod_{1\le j < \ell \le
k}x_{i_j i_\ell}$ of degree $\frac{k(k-1)}{2}$.
\end{definition}

\begin{theorem}
It is W[1]-hard to test whether any given $n-$variate polynomial
of degree $\frac{k(k-1)}{2}$ represented by an arithmetic circuit
has a $k$-clique monomial in its sum-product expansion.
\end{theorem}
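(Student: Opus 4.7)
The plan is to give a polynomial-time parameterized reduction from the $k$-Clique problem, which Downey and Fellows \cite{downey-fellows95} proved is W[1]-hard. Given an instance $G=(V,E)$ of $k$-Clique with $V=\{1,\ldots,n\}$, I will associate one formal variable $x_{uv}$ with each unordered edge $\{u,v\}\in E$ (indexed so that $u<v$) and form the single polynomial
\[
F(x)\;=\;\Bigl(\,\sum_{(u,v)\in E,\ u<v} x_{uv}\Bigr)^{k(k-1)/2}
\]
over a field whose characteristic is either zero or a prime exceeding $\binom{k}{2}!$. The polynomial $F$ is represented by a tiny arithmetic circuit: one layer of addition gates builds $S=\sum_{(u,v)\in E,\ u<v} x_{uv}$ in $O(|E|)$ gates, and repeated squaring raises $S$ to the power $k(k-1)/2$ using $O(\log k)$ multiplication gates. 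Thus the reduction runs in time polynomial in $n$ and $k$, produces a circuit of size polynomial in the input, and the resulting polynomial has degree exactly $k(k-1)/2$, as the theorem requires.

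The next step is to verify the correctness of the reduction, namely that $F$ has a $k$-clique monomial in its sum-product expansion iff $G$ contains a $k$-clique. For the forward direction, suppose $C=\{i_1<i_2<\cdots<i_k\}\subseteq V$ induces a clique. Then for every pair $j<\ell$ the edge $(i_j,i_\ell)$ lies in $E$, so the variable $x_{i_ji_\ell}$ is a term of $S$; the multinomial expansion of $S^{k(k-1)/2}$ therefore contains the multilinear monomial $\prod_{1\le j<\ell\le k} x_{i_ji_\ell}$ with coefficient $\binom{k}{2}!$, which is nonzero by our choice of characteristic, and this is exactly the $k$-clique monomial with respect to $C$. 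Conversely, any monomial appearing in the expansion of $F$ is a product of $k(k-1)/2$ terms each drawn from $S$ and therefore involves only variables $x_{uv}$ with $(u,v)\in E$. Hence if the $k$-clique monomial $\prod_{j<\ell} x_{i_ji_\ell}$ with respect to some $C=\{i_1<\cdots<i_k\}$ is present in $F$, then every pair $\{i_j,i_\ell\}$ must be an edge of $G$, so $C$ is a $k$-clique.

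Because the reduction is polynomial-time computable and the parameter $k$ of the target instance equals the parameter of the source instance (and the degree $k(k-1)/2$ is a fixed function of $k$ alone), a fixed-parameter tractable algorithm for the $k$-clique monomial testing problem would transfer to an FPT algorithm for $k$-Clique, contradicting its W[1]-hardness. The only nontrivial detail, and the point at which one has to be careful rather than merely computational, is to ensure that the multinomial coefficient $\binom{k}{2}!$ does not vanish in the working field; the flexibility in choosing the field ${\cal P}$ (in particular, taking characteristic zero, or any prime field whose characteristic strictly exceeds $k(k-1)/2$) removes this obstacle, and no further structural lemmas are needed.
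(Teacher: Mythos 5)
Your reduction is correct, and it targets the same source problem ($k$-Clique, W[1]-hard by Downey and Fellows) with the same variable encoding (one variable $x_{uv}$ per edge, target monomial $\prod_{j<\ell}x_{i_ji_\ell}$), so the skeleton matches the paper. The gadget differs, though: the paper builds a recursively defined polynomial $f(G,k)$ that sums over vertices and multiplies in the $t$-th power of each vertex's edge-neighborhood sum, effectively assembling a clique vertex by vertex, whereas you simply take $F=S^{k(k-1)/2}$ with $S=\sum_{(u,v)\in E}x_{uv}$ and raise it by repeated squaring. Your version is leaner and its correctness argument is more transparent in both directions (the backward direction needs nothing more than the fact that variables exist only for edges, which is all the paper's backward direction uses as well). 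What the simpler gadget costs you is exactly the issue you flagged: the clique monomial appears in $S^{k(k-1)/2}$ with multinomial coefficient $\left(\frac{k(k-1)}{2}\right)!$, so you must choose the coefficient field so that this does not vanish; a prime characteristic exceeding $k(k-1)/2$ suffices (your stronger requirement of exceeding the factorial itself is unnecessary, and note the paper's framework requires a finite field, so characteristic zero is not available, though a large prime field is). The paper's proof never discusses coefficients at all -- it treats the sum-product expansion syntactically, selecting one term per factor -- so under that reading the cancellation worry disappears for both constructions; under the literal polynomial reading, your proof is actually the more careful of the two. Either way, your argument is complete and sound.
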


\begin{proof}
We shall reduce the $k$-clique problem to the $k$-clique monomial
testing problem. Let $G=(V,E)$ be an undirected graph and $k$ an
integer parameter. $V=\{v_1, v_2, \ldots, v_m\}$ is the set of
vertices. Each $(i, j) \in E$ represents the edge connecting
vertices $v_i$ and $v_j$. For each edge $(i, j) \in E$, we define a
variable $x_{ij}$. Let $n= |E|$. We construct a polynomial $f$
with $n$ variables.
\begin{eqnarray}
f(G, 1) & = & 1, \nonumber \\
f(G, 2) &=& \sum_{(i, j)\in E} x_{ij}, \nonumber\\
f(G, t+1) & = & \sum^{m}_{i=1} \left(\sum_{(i,j)\in E}
x_{ij}\right)^t \cdot f(G, t) \nonumber
\end{eqnarray}
As followed from the above definition, $f(G, k)$ has $n=|E|$
variables and its degree is $\frac{k(k-1)}{2}$. It is easy to see
that $f(G, k)$ can be computed by an arithmetic circuit.

If $G$ has a $k$-clique $A=\{i_1,i_2\ldots,i_k\}$, then there are
$\frac{k(k-1)}{2}$ edges connecting any two vertices in $A$. By
definition, $f(G, k)$ has a term $(x_{i_1 i_2} + \cdots + x_{i_1
i_k} + \cdots + x_{i_{k-1}i_k})^{k-1} \cdot f(G, k-1).$ So, we can
select $\pi_1 =x_{i_1 i_2} \cdots x_{i_1 i_k}$ from the first
factor of this term. By simple induction, we can select a
$(k-1)$-clique monomial of degree $\frac{(k-1)(k-2)}{2}$ with
respect to $A - \{i_1\}$. Then, $\pi_1 \cdot \pi_2$ is a
$k$-clique monomial with respect to $A$. On the other hand, it
$f(G, k)$ has a $k$-clique monomial with respect to $A$, then by
definition, $A$ is a $k$-clique for $G$.
\end{proof}

\section*{Acknowledgments}

We thank Ioannis Koutis for helping us understand his group
algebra in \cite{koutis08}.  Bin Fu's research is supported by an
NSF CAREER Award, 2009 April 1 to 2014 March 31.


\begin{thebibliography}{99}
\bibliographystyle{plain}


\bibitem{agrawal-biswas03}
Manindra Agrawal and Somenath Biswas, Primality and identity
testing via Chinese remaindering, Journal of the ACM 50(4):
429-443, 2003.

\bibitem{aks04}
Manindra Agrawal, Neeraj Kayal and Nitin Saxena, PRIMES is in P,
Ann. of Math, 160(2): 781-793, 2004.

\bibitem{arora98}
S. Arora, C. Lund, R. Motwani, M. Sudan, and M. Szegedy, Proof
verification and the hardness of approximation problems, Journal
of the ACM 45  (3): 501–555, 1998.

\bibitem{aspvall-plass-tarjan79}
Bengt Aspvall, Michael F. Plass and Robert E. Tarjan,
A linear-time algorithm for testing the truth of certain quantified boolean formulas,
Information Processing Letters  8 (3): 121-123, 1979.

\bibitem{beigel93}
Richard Beigel, The polynomial method in circuit compplexity,
Proceedings of the Eighth Conference on Structure in Complexity
Theory, pp. 82-95, 1993.

\bibitem{bshouty95}
Nader H. Bshouty, Zhixiang Chen, Scott E. Decatur, and Steve
Homer, One the learnability of $Z_N$-DNF formulas, Proceedings of
the Eighth Annual Conference on Computational Learning Theory
(COLT 1995), Santa Cruz, California, USA. ACM, 1995, pp. 198-205.

\bibitem{chen-fu10}
Zhixiang Chen and Bin Fu, The complexity of testting monomials in
multivariate polynomials, submitted for publication, June 2010.

\bibitem{jianer-chen07}
Jianer Chen, Songjian Lu, Sing-Hoi Sze and Fenghui Zhang, Improved
algorithms for path, matching, and packing problems, SODA, pp.
298-307, 2007.

\bibitem{chen00}
Zhi-Zhong Chen and Ming-Yang Kao, Reducing randomness via
irrational numbers, SIAM J. Comput. 29(4): 1247-1256, 2000.

\bibitem{downey-fellows95}
R.G. Downey and M.R. Fellows, Fixed parameter tractability and
completeness. II. On completeness for W[1], Theoretical Computer
Science, 141(1-2):109-131, 1995.

\bibitem{feige96}
U. Feige, S. Goldwasser, L. Lov\'asz, S. Safra, and M. Szegedy,
Interactive proofs and the hardness of approximating cliques,
Journal of the ACM (ACM) 43 (2): 268–292, 1996.

\bibitem{fu92}
Bin Fu, Separating PH from PP by relativization, Acta Math. Sinica
8(3):329-336, 1992.


\bibitem{kabanets03}
V. Kabanets and R. Impagliazzo, Derandomizing polynomial identity
tests means proving circuit lower bounds, STOC, pp. 355-364, 2003.

\bibitem{klivans01}
Adam Klivans and Rocco A. Servedio, Learning DNF in time
$2^{\tilde{O}(n^{1/3})}$, STOC, pp. 258-265, 2001.

\bibitem{koutis08}
Ioannis Koutis, Faster algebraic algorithms for path and packing problems,
Proceedings of the International Colloquium on Automata,
Language and Programming (ICALP), LNCS, vol. 5125, Springer, pp. 575-586, 2008.

\bibitem{minsky-papert68}
M. Minsky and S. Papert, Perceptrons (expanded edition 1988), MIT
Press, 1968.


\bibitem{motwani95}
R. Motwani and P. Raghavan, Randomized Algorithms, Cambridge
University Press, 1995.

\bibitem{naor95}
Moni Naor, Leonard J. Schulman and Aravind Srinivasan, Splitters
and near-optimal derandomization, FOCS, pp. 182-191, 1995.

\bibitem{raz05}
Ran Raz and Amir Shpilka, Deterministic polynomial identity
testing in non-commutative models, Computational Complexity 14(1):
1-19,  2005.


\bibitem{schoning02}
U. Sch\"{o}ning, A probabilistic algorithm for $k$-SAT based on
limited local search and restart, Algorithmica, vol 32, pp.
615-623, 2002.

\bibitem{shamir92}
A. Shamir, IP = PSPACE, Journal of the ACM, 39(4): 869-877, 1992.


\bibitem{williams09}
Ryan Williams, Finding paths of length $k$ in $O^*(2^k)$ time, Information Processing Letters, 109, 315-318, 2009.
\end{thebibliography}
\end{document}